\newtheorem{proposition}{Proposition}
\newtheorem{corollary}[proposition]{Corollary}
\newtheorem{lemma}[proposition]{Lemma}
\newcommand{\idty}{\mathbb{1}}
\DeclareMathOperator{\id}{id}
\DeclareMathOperator*{\tr}{Tr}
\newcommand{\<}{\langle}
\renewcommand{\>}{\rangle}
\providecommand{\abs}[1]{|#1|}
\providecommand{\norm}[1]{\Vert #1 \Vert}
\renewcommand{\c}[1]{\mathcal{#1}}
\renewcommand{\r}[1]{\mathrm{#1}}
\begin{document}

\title{Universal bounds for the Holevo quantity, coherent information \\
and the Jensen-Shannon divergence}

\author
{Wojciech~Roga$^1$, Mark~Fannes$^2$ and Karol~{\.Z}yczkowski$^{1,3}$ \\
{\normalsize\itshape {$^1$Smoluchowski Institute of Physics, Jagiellonian University, ul.~Reymonta 4, 30-059 Krak{\'o}w, Poland}} \\
{\normalsize\itshape {$^2$Instituut voor Theoretische Fysica, Universiteit Leuven, B-3001 Leuven, Belgium}} \\
{\normalsize\itshape $^{{3}}$Centrum Fizyki Teoretycznej, Polska Akademia Nauk, Al.~Lotnik{\'o}w 32/44, 02-668 Warszawa, Poland}}

\date{{April 26, 2010}}

\begin{abstract}
The Holevo quantity provides an upper bound for the mutual information between the sender of a classical message encoded in quantum carriers and the receiver. Applying the strong sub-additivity of entropy we prove that the Holevo quantity associated with an initial state and a given quantum operation represented in its Kraus form is not larger than the exchange entropy. This implies upper bounds for the coherent information and for the quantum Jensen--Shannon divergence. Restricting our attention to classical information we bound the transmission distance between any two probability distributions by the entropic distance, which is a concave function of the Hellinger distance.
\end{abstract}

\maketitle



\section{Introduction}
\label{s1}

The goal of quantum information is to efficiently apply quantum resources to encode, manipulate, and transmit information. One of the key results about transmitting classical information by quantum means is the {\sl Holevo bound}: it provides an upper bound for the accessible information or, stated differently, quantum distinguishability
between initial states cannot be increased by measuring them~\cite{nielsen}.

Assume that a source emits messages written in an alphabet $X$ and that these messages are sent to a receiver using a quantum device. Each letters $a_i$ is encoded in a quantum state, i.e., in a density matrix $\rho_i$. The receiver performs a measurement on the encoded messages and obtains classical data written in an alphabet $Y$. The probability that the source emits $a_i$ is given by $q_i$ and so with high probability messages of length $n$ will belong to a set containing about $\exp\bigl( n H(X) \bigr)$ words where $H$ is the Shannon entropy: $H(X):= -\sum_i q_i \ln q_i$. By performing a general POVM the receiver reads the letter $b_i$ with a probability $p_i$. The accessible information is then the {\sl mutual information} $H(X:Y) = H(X) + H(Y) - H(X,Y)$, where $H(X,Y)$ is the Shannon entropy of the joint probability distribution. The fundamental result of Holevo gives~\cite{Holevo,Be96} an upper bound for the mutual information, independent of the measurement:
\begin{equation}
H(X:Y) \le S\Bigl( \sum_i q_i \rho_i \Bigr) - \sum_i q_i S(\rho_i).
\label{hol}
\end{equation}
Here  $S(\rho) := -\tr \rho \ln\rho$ is the von~Neumann entropy of the state $\rho$. The right-hand side of inequality~(\ref{hol})
\begin{equation}
\chi\bigl( \{q_i,\rho_i\} \bigr) \;:=\; S\Bigl( \sum_i q_i \rho_i \Bigr) - \sum_i q_i S(\rho_i)
\label{holev}
\end{equation}
is the {\sl Holevo quantity} of the ensemble $\{q_i,\rho_i\}$.

It is well-known~\cite{nielsen} that the Shannon entropy of the probability vector $\{q_i\}$ gives an upper bound for $\chi$: $\chi\bigl( \{q_i,\rho_i\} \bigr) \le H(\{q_i\})$. In this work we provide a better upper bound for the Holevo quantity and explore some of its consequences for classical and quantum information theory.

\section{Quantum Operations}
\label{s2}

Consider a quantum system $A$ in the state $\rho$ interacting with an environment $B$ initially in a pure state $|\phi\rangle \in \c H_B$. Any quantum operation $\Phi$ on the system can by seen as a global unitary dynamics followed by the partial trace over the environment:
\begin{equation}
\Phi: \rho \mapsto \rho' = {\tr}_B \Bigl( U\, \bigl( \rho\otimes|\phi\>\<\phi| \bigr)\, U^\dagger \Bigr).
\label{unit}
\end{equation}
Here $U$ is a unitary matrix of the  total system $AB$ and ${\tr}_B$ denotes the partial trace over the environment.
The map $\Phi$ is completely positive and can be represented in Kraus form:
\begin{equation}
\Phi(\rho) = \sum_i K_i \rho K_i^\dagger,
\label{quantop}
\end{equation}
where the operators $K_i$ are determined by minors of $U$. Due to the unitarity of $U$ the set of Kraus operators is
a resolution of the identity: $\sum_i K_i^\dagger K_i = \idty$ which implies that $\Phi$ preserves the trace.

We consider also a quantum map $\tilde\Phi$ {\sl complementary} to $\Phi$ defined by the partial trace over the principal system~\cite{alickifannes,Ho05}:
\begin{equation}
\tilde\Phi: \rho \mapsto \sigma = {\tr}_A \Bigl(U\, \bigl(\rho\otimes|\phi\>\<\phi|\bigr)\, U^\dagger \Bigr).
\end{equation}
The state $\sigma = \tilde\Phi(\rho)$ is the state of the environment after the interaction and is called a {\sl correlation matrix}. Its matrix elements can be expressed in terms of the Kraus operators:
\begin{equation}
\sigma_{ij} = \tr \rho K_j^\dagger K_i.
\label{sigma}
\end{equation}
If the initial state $\rho$ is pure then $S(\sigma)$ is the entropy exchanged between the system and the environment. Therefore $S(\sigma)$ is called the {\sl exchange entropy}.

\begin{figure}[htbp]
\centering
\includegraphics[width=0.45\textwidth]{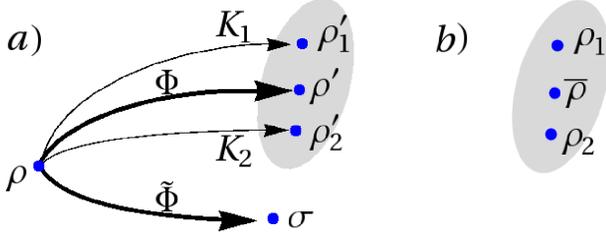}
\caption{a) A dynamical picture: an initial quantum state $\rho$ is sent by a map $\Phi$ into $\rho'$, while the complementary map $\tilde\Phi$ sends it into $\sigma$. A Kraus operator $K_i$ maps $\rho$ into $\rho'_i$ with probability $q_i$ so that $\rho'$ is the barycentre of the ensemble $\{q_i,\rho'_i\}$. \\
b) A static picture: the ensemble $\{q_i,\rho_i\}$ determines the barycentre $\bar\rho$.}
\label{fig:fig1}
\end{figure}

Due to the identity resolution the set of Kraus operators describes a Positive Operator-Valued Measure. Such a selective measurement transforms an initial state $\rho$ into one of the output states $\rho'_i := K_i \rho K_i^\dagger/ (\tr \rho K_i^\dagger K_i)$ with probability $q_i = \tr \rho K_i^\dagger K_i$. For this setup, shown  in Fig~\ref{fig:fig1}a, one defines the Holevo quantity $\chi\bigl( \{q_i,\rho'_i\} \bigr)$, see~(\ref{holev}).

\section{The Main Result}
\label{s3}

To prove our main result we first reformulate the strong sub-additivity inequality for quantum entropy.

\begin{proposition}
\label{pro1}
Let $\omega_{123}$ be a three party quantum state, then
\begin{equation}
S(\omega_1) + S(\omega_3) \le S(\omega_{12}) + S(\omega_{23}).
\label{ssa2}
\end{equation}
Conversely, if~(\ref{ssa2}) holds for any three party quantum state then also
\begin{equation}
S(\omega_{123}) + S(\omega_2) \le S(\omega_{12}) + S(\omega_{23}).
\label{ssa1}
\end{equation}
In these inequalities reduced states are obtained by tracing out over the complementary subsystems, e.g., $\omega_1 = {\tr}_{23}\, \omega_{123}$.
\end{proposition}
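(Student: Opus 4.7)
The natural tool is \emph{purification}. I would adjoin a fourth auxiliary party~$4$ and extend $\omega_{123}$ to a pure state $\tilde\omega_{1234}=|\Psi\rangle\langle\Psi|$ with $\tr_4\tilde\omega_{1234}=\omega_{123}$; write $\tilde\omega_X$ for its marginal on $X\subseteq\{1,2,3,4\}$. The two facts that do all the work are (i)~$\tilde\omega_X=\omega_X$ for every $X\subseteq\{1,2,3\}$, and (ii)~any two complementary marginals of a pure state have equal von~Neumann entropy, i.e.\ $S(\tilde\omega_X)=S(\tilde\omega_{\{1,2,3,4\}\setminus X})$.

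For the first statement I would take the standard Lieb--Ruskai strong subadditivity, i.e.\ the inequality~(\ref{ssa1}), for granted and apply it to the three-party marginal $\tilde\omega_{234}$ with middle party~$3$:
\[
S(\tilde\omega_{234})+S(\tilde\omega_3)\le S(\tilde\omega_{23})+S(\tilde\omega_{34}).
\]
Purity replaces $S(\tilde\omega_{234})=S(\tilde\omega_1)=S(\omega_1)$ and $S(\tilde\omega_{34})=S(\tilde\omega_{12})=S(\omega_{12})$; the other two terms already refer to subsystems of $\omega_{123}$. The resulting inequality is precisely~(\ref{ssa2}).

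For the converse I would apply the hypothesis~(\ref{ssa2}) to the \emph{same} marginal $\tilde\omega_{234}$, again with middle party~$3$, obtaining
\[
S(\tilde\omega_2)+S(\tilde\omega_4)\le S(\tilde\omega_{23})+S(\tilde\omega_{34}).
\]
Purity now yields $S(\tilde\omega_4)=S(\tilde\omega_{123})=S(\omega_{123})$ and $S(\tilde\omega_{34})=S(\omega_{12})$, converting this into~(\ref{ssa1}).

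There is no genuine analytic obstacle: the whole argument is bookkeeping with complementary subsystems of a four-party purification. The pleasant observation is that the same marginal $\tilde\omega_{234}$ works in both directions, because purification exchanges the global entropy $S(\omega_{123})$ of~(\ref{ssa1}) with the single-party entropy $S(\omega_1)$ of~(\ref{ssa2}) whenever the purifying ancilla is counted as one of the three parties. The only step requiring care is tracking which entropies get replaced by their complementary counterparts, so I would list the purity identity for each term explicitly before substituting.
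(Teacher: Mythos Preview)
Your proof is correct and follows exactly the paper's approach: purify $\omega_{123}$ to a four-party pure state, apply the appropriate three-party inequality to the marginal on $\{2,3,4\}$, and use the complementary-entropy identities for pure states to translate back. Your bookkeeping is in fact slightly cleaner than the paper's, which contains a minor typo in the converse direction (their displayed intermediate inequality has $S(\omega_{24})$ where $S(\omega_{34})$ is intended), but the argument is the same.
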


\begin{proof}
The proof relies on purification. Indeed, purifying $\omega_{123}$ to $\omega_{1234}$ we have the strong sub-additivity inequality
\begin{equation}
S(\omega_{234}) + S(\omega_3) \le S(\omega_{23}) + S(\omega_{34}).
\label{ssa3}
\end{equation}
As $\omega_{1234}$ is pure, $S(\omega_{234}) = S(\omega_1)$ and $S(\omega_{34}) = S(\omega_{12})$ which turns~(\ref{ssa3}) into~(\ref{ssa2}).

Conversely, if~(\ref{ssa2}) holds, we apply it to a purification $\omega_{1234}$ of $\omega_{123}$ to obtain
\begin{equation}
S(\omega_2) + S(\omega_4) \le S(\omega_{23}) + S(\omega_{24}).
\label{ssa4}
\end{equation}
Using $S(\omega_4) = S(\omega_{123})$ and $S(\omega_{24}) = S(\omega_{13})$ we recover (\ref{ssa1}).
\end{proof}

It is well-known that, in contrast to classical entropy, quantum entropy does not necessarily increase in the number of parties: $S(\omega_1) \not\le S(\omega_{12})$. Proposition~\ref{pro1} is therefore trivial in the classical case but bounds the sum of entropies of single party quantum states by that of entropies of joint extensions. Our main result is contained in the following bounds:

\begin{proposition}
\label{pro2}
Consider a state $\rho$, a quantum operation $\Phi$ and the image of $\rho$ under $\Phi$: $\rho' = \Phi(\rho) = \sum_i K_i\rho K_i^\dagger$. The complementary channel produces a correlation matrix $\sigma = \tilde\Phi(\rho)$ with elements specified in~(\ref{sigma}). Define the probability vector with entries $q_i := \tr \rho K_i^\dagger K_i$ and quantum states $\rho'_i := K_i\rho K_i^\dagger/q_i$ so that $\rho' = \sum_i q_i\rho'_i$. Then \\[6pt]
a) the Holevo quantity is bounded by the exchange entropy:
\begin{equation}
\chi \bigl( \{q_i, \rho'_i \} \bigr) \le S(\sigma) \le H(\{q_i\}) \enskip\text{and}
\label{propeq1}
\end{equation}
b) the average entropy is bounded by the entropy of the initial state,
\begin{equation}
\sum_i q_i S(\rho'_i) \le S(\rho).
\label{propeq2}
\end{equation}
\end{proposition}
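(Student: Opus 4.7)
The plan is to dilate both the channel and its complement into a single pure state on four systems so that all four entropies appearing in the proposition arise as entropies of marginals, and then to invoke Proposition~\ref{pro1}. Fix an orthonormal basis $\{|i\>\}$ labelling the Kraus index and define the isometry
\begin{equation*}
V:\c H_A\to\c H_A\otimes\c H_B\otimes\c H_C,\quad V|\psi\>=\sum_i K_i|\psi\>\otimes|i\>_B\otimes|i\>_C,
\end{equation*}
which is well defined because $\sum_i K_i^\dagger K_i=\idty$. Pick a purification $|\psi\>_{AR}$ of $\rho$ and set $|\Omega\>_{ABCR}=(V\otimes\idty_R)|\psi\>$, a pure state. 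Writing $\omega_X$ for the reduced state of $|\Omega\>\<\Omega|$ on subsystem $X$, a short computation produces $\omega_A=\rho'$, $\omega_R=\rho$, $\omega_B=\omega_C=\sum_i q_i|i\>\<i|$, $\omega_{AC}=\sum_i q_i\,\rho'_i\otimes|i\>\<i|_C$ and $\omega_{BC}=\sum_{ij}\sigma_{ij}\,|ii\>\<jj|_{BC}$. The last is supported on the diagonal subspace $\spa\{|ii\>_{BC}\}$ and acts on it exactly as $\sigma$, hence it is isospectral to $\sigma$. Consequently $S(\omega_B)=S(\omega_C)=H(\{q_i\})$, $S(\omega_{AC})=H(\{q_i\})+\sum_i q_i S(\rho'_i)$ and $S(\omega_{BC})=S(\sigma)$.

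For part~(a), I would apply inequality~(\ref{ssa2}) of Proposition~\ref{pro1} to the tripartite state $\omega_{ABC}$ with the assignment $1\to A$, $2\to C$, $3\to B$:
\begin{equation*}
S(\omega_A)+S(\omega_B)\le S(\omega_{AC})+S(\omega_{BC}).
\end{equation*}
After inserting the entropies above, the $H(\{q_i\})$ supplied by $S(\omega_B)$ cancels the corresponding piece of $S(\omega_{AC})$, and one is left with exactly $\chi(\{q_i,\rho'_i\})\le S(\sigma)$. The second inequality $S(\sigma)\le H(\{q_i\})$ is the Schur--Horn / Peierls bound: the diagonal entries of $\sigma$ are $\{q_i\}$, the eigenvalues of $\sigma$ majorize its diagonal, so the Shannon entropy of the diagonal dominates the von~Neumann entropy of the matrix.

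For part~(b), I would exploit purity of $|\Omega\>$ on $ABCR$: the complement of $AC$ is $BR$, hence $S(\omega_{AC})=S(\omega_{BR})$, and ordinary subadditivity gives $S(\omega_{BR})\le S(\omega_B)+S(\omega_R)=H(\{q_i\})+S(\rho)$; cancelling $H(\{q_i\})$ yields $\sum_i q_i S(\rho'_i)\le S(\rho)$. The one delicate point is the choice of subsystems in (\ref{ssa2}): the CNOT-like copy built into $V$ forces $S(\omega_B)=S(\omega_C)$, and it is precisely this shared classical entropy that cancels to isolate $\chi$ on the left and $S(\sigma)$ on the right.
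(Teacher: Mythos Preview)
Your proof is correct and is essentially the paper's argument: the isometry $V$ and the tripartite state $\omega_{ABC}$ coincide (up to reordering of tensor factors) with the paper's $F$ and $\omega_{123}$, and part~(a) follows from exactly the same instance of~(\ref{ssa2}) together with the majorization bound for the diagonal. For part~(b) the paper stays with the three-party state and invokes the Araki--Lieb inequality $S(\omega_{23})-S(\omega_1)\le S(\omega_{123})=S(\rho)$ directly, which is precisely the purification-plus-subadditivity step you carry out explicitly with the reference system $R$.
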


\begin{proof}
a) The rightmost inequality: $S(\sigma) \le H(\{q_i\})$ is a direct consequence of the majorization theorem which says that for any state $S(\sigma) \le S({\rm diag}(\sigma))$, see e.g.~\cite{BZ06}. To prove the left inequality 
consider the isometry $F |\phi\> := \sum_i |i\> \otimes |i\> \otimes K_i|\phi\>$ and the three-partite quantum state
\begin{equation}
\omega_{123} := F \rho F^\dagger = \sum_{ij} |i\>\<j| \otimes |i\>\<j| \otimes K_i \rho K_j^\dagger.
\label{om123}
\end{equation}
It is convenient to introduce the notation $A_{ij} := K_i \rho K_j^\dagger$, so that $q_i = \tr A_{ii}$ and $\rho'_i = A_{ii}/q_i$. One checks that
\begin{align}
&S(\omega_{12}) = S(\sigma) \nonumber \\
&S(\omega_3) = S(\sum_i q_i \rho'_i) \enskip\text{and} \nonumber \\
&- \sum_i q_i S(\rho'_i) = \sum_i \tr A_{ii}\ln A_{ii} - \sum_i \tr A_{ii} \ln \tr A_{ii} \nonumber \\
&\quad = S(\omega_1) - S(\omega_{23}).
\label{average}
\end{align}
Substituting these expressions in~(\ref{ssa2}) yields the first inequality in~(\ref{propeq1}).

b) Since the transformation $F$ in (\ref{om123}) is an isometry the three-partite state $\omega_{123}$ has the same spectrum as $\rho$ up to multiplicities of zero. Hence, $\omega_{123}$ and $\rho$ have the same entropy.
The equality~(\ref{average}) and the Araki--Lieb inequality $S(\omega_{1}) - S(\omega_{23}) \le S(\omega_{123})$
then yield (\ref{propeq2}).
\end{proof}

The bounds in proposition~\ref{pro2} are universal, they hold for any quantum operation $\Phi$ and any initial state $\rho$. We analyze here some of their consequences. The inequality~(\ref{propeq1}) is saturated for orthogonal Kraus operators, $\tr K_i^\dagger K_j = \delta_{ij} K_i = \delta_{ij} K_i^\dagger$ which form a projective von~Neumann measurement. In this case all output states $\rho'_i$ are pure, so $\sum_i q_i S(\rho_i) = 0$. The state $\rho'$ is a mixture of pure and orthogonal states with probabilities $q_i =\tr \rho K_i^\dagger K_i$. The correlation matrix $\sigma$ is then diagonal and $S(\rho') = S(\sigma)$.

Note also that inequality~(\ref{propeq2}) differs from $\bar S = \sum_i q_i S(\rho'_i) \le S(\rho')$, which is implied by the concavity of entropy. For a bistochastic map $\Phi$ the entropy does not decrease, so in this case we may write $\bar S \le S(\rho) \le S(\rho')$.

The Jamio{\l}kowski isomorphism represents of a quantum map $\Phi$ acting on an $N$-level system by a density matrix $\sigma_\Phi$ on an extended space: $\sigma_\Phi = \Phi \otimes \id (|\phi^+\>\<\phi^+|)$, where $|\phi^+\> = \frac{1}{\sqrt{N}} \sum_i |i\> \otimes |i\>$ is a maximally entangled state.
The degree of non-unitarity of an operation $\Phi$ can be quantified by its entropy~\cite{BZ06}, defined as the entropy of the corresponding state  $S(\Phi) := S(\sigma_\Phi)$. If the initial state $\rho$ is maximally mixed, then the exchange entropy $S(\sigma)$ is equal to the entropy of the map~\cite{roga}. Proposition~\ref{pro2} yields now a simple interpretation of the entropy of a map: it is an upper bound for the Holevo quantity~(\ref{holev}) for a transformation of the maximally mixed state $\rho_* = \idty /N$. Furthermore, the entropy of a map is an upper bound for the Holevo quantity associated with ensembles of Kraus maps acting on the mixed state $\rho_*$
\begin{equation}
\max_{\{K_i\}}\  \chi\Big(
\Bigl\{ \tr K_i K_i^\dagger/N, \frac{K_i  K_i^\dagger}{\tr K_i K_i^\dagger} \Bigr\} \Bigl) \;\le\; S(\Phi),
\label{entop}
\end{equation}
where the maximum is taken over sets of Kraus operators that realize the same quantum operation: $\Phi(\rho) = \sum_i K_i\rho K_i^\dagger$.

\section{Coherent Information}
\label{s4}

Investigating the entropy transfer induced by a quantum map $\Phi$ that sends a state $\rho$ to $\rho' = \Phi(\rho) = \sum_i K_i\rho K_i^\dagger$, Lindblad~\cite{lindblad} proved the inequality 
\begin{equation}
S(\rho') - S(\sigma) \le S(\rho) \le S(\rho') + S(\sigma).
\label{lin}
\end{equation}
The correlation matrix $\sigma$ is defined in~(\ref{sigma}) and the proof is based on sub-additivity of entropy and on the Araki-Lieb triangle inequality. The difference of entropies, $I_{\text{coh}} := S(\Phi(\rho)) - S(\sigma)$ is called {\sl coherent information}. Linblad's inequality states that $I_{\text{coh}} \le S(\rho)$. We are now in position to refine this bound.

\begin{proposition}
Consider a state $\rho$ and quantum operations $\Phi_1$ and $\Phi_2$, with  $\Phi_1(\rho) = \sum_i K_i \rho K_i^\dagger$ and a quantum ensemble $\{q_i,\rho'_i\}$ where $q_i := \tr \rho K_i^\dagger K_i$ and $\rho'_i := K_i \rho K_i^\dagger/q_i$. Then \\[6pt]
a) The coherent information for the quantum operation $\Phi_1$ is bounded by
\begin{equation}
I_{\text{coh}}(\Phi_1) \le \sum_i q_i S(\rho'_i) \le S(\rho).
\label{icoh}
\end{equation}
b) The coherent information for the concatenation $\Phi_2 \circ \Phi_1$ is bounded by
\begin{equation}
I_{\text{coh}}(\Phi_2 \circ \Phi_1) \le \sum_i p_i S\bigl(\Phi_2(\rho'_i)\bigr).
\label{icoh2}
\end{equation}
\end{proposition}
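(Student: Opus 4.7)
The plan for part (a) is to read off both inequalities directly from Proposition~\ref{pro2}. The Holevo bound $\chi(\{q_i,\rho'_i\}) \le S(\sigma)$ unpacks as $S\bigl(\sum_i q_i\rho'_i\bigr) - \sum_i q_i S(\rho'_i) \le S(\sigma)$, and since $\sum_i q_i\rho'_i = \Phi_1(\rho)$ this rearranges to
\begin{equation}
I_{\text{coh}}(\Phi_1) = S(\Phi_1(\rho)) - S(\sigma) \le \sum_i q_i S(\rho'_i).
\end{equation}
The second inequality $\sum_i q_i S(\rho'_i) \le S(\rho)$ is precisely~(\ref{propeq2}). So part (a) is nothing more than a rewriting of Proposition~\ref{pro2}.

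For part (b) my strategy is to apply part (a) to the composite map. Choose any Kraus representation $\Phi_2(\tau) = \sum_j M_j \tau M_j^\dagger$; then $\Phi_2\circ\Phi_1$ is implemented by the Kraus operators $L_{ji} := M_j K_i$. Part (a), applied to this composite channel, yields
\begin{equation}
I_{\text{coh}}(\Phi_2\circ\Phi_1) \le \sum_{j,i} q_{ji}\, S(\rho''_{ji}),
\end{equation}
where $q_{ji} := \tr \rho L_{ji}^\dagger L_{ji}$ and $\rho''_{ji} := L_{ji}\rho L_{ji}^\dagger/q_{ji}$.

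Next I would unfold these quantities in terms of the original ensemble $\{q_i,\rho'_i\}$. A short calculation gives the factorization $q_{ji} = q_i\, p_{j|i}$ with $p_{j|i} := \tr \rho'_i M_j^\dagger M_j$, together with $\rho''_{ji} = M_j \rho'_i M_j^\dagger / p_{j|i}$. For fixed $i$ the states $\{p_{j|i},\rho''_{ji}\}_j$ form an ensemble whose barycentre is exactly $\Phi_2(\rho'_i)$. Concavity of the von~Neumann entropy then collapses the inner sum,
\begin{equation}
\sum_j p_{j|i} S(\rho''_{ji}) \le S\bigl(\Phi_2(\rho'_i)\bigr),
\end{equation}
and multiplying by $q_i$ and summing over $i$ gives the claimed bound (with the $p_i$ in the statement read as $q_i$).

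I do not anticipate a genuine obstacle: the argument is really just Proposition~\ref{pro2} plus one application of concavity. The only place where care is needed is the bookkeeping in (b), namely recognising that Kraus operators compose as products so that the output ensemble of $\Phi_2\circ\Phi_1$ factorises as $q_{ji}=q_i p_{j|i}$ with $\Phi_2(\rho'_i)=\sum_j p_{j|i}\rho''_{ji}$; once this is in place the concavity step is immediate.
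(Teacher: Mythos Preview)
Your argument is correct. Part~(a) matches the paper exactly: both simply rearrange the two inequalities of Proposition~\ref{pro2}.

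For part~(b), however, you take a genuinely different route. The paper builds a four-partite state
\[
\omega'_{1234} = \sum_{ijk\ell} |i\>\<j| \otimes |k\>\<\ell| \otimes |k\>\<\ell| \otimes L_iK_k\rho K^\dagger_\ell L^\dagger_j
\]
(with $\{L_i\}$ a Kraus set for $\Phi_2$), applies the reformulated strong sub-additivity $S(\omega'_4)+S(\omega'_{13})\le S(\omega'_{123})+S(\omega'_{24})$, and then finishes by showing $S(\omega'_3)\le S(\omega'_{13})$ via a block-diagonal majorization argument. You instead bootstrap from part~(a): apply it once to the composite channel with Kraus operators $M_jK_i$, then collapse the $j$-index using concavity of entropy on each ensemble $\{p_{j|i},\rho''_{ji}\}_j$ with barycentre $\Phi_2(\rho'_i)$. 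Your approach is more modular and uses only Proposition~\ref{pro2} plus concavity, whereas the paper's proof invokes SSA a second time together with an additional entropy-monotonicity step; on the other hand, the paper's four-partite construction makes the joint structure of the two channels explicit in a single state. Both are clean; yours is arguably the more economical of the two.
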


\begin{proof}
Relation~(\ref{icoh}) is a direct consequence of proposition~\ref{pro2}, as these inequalities are obtained by combining~(\ref{propeq1}) and~(\ref{propeq2}). To show~(\ref{icoh2}) we consider the four-partite state
\begin{equation}
\omega'_{1234} := \sum_{ijk\ell} |i\>\<j| \otimes |k\>\<\ell| \otimes |k\>\<\ell| \otimes L_iK_k\rho K^\dagger_\ell L^\dagger_j,
\label{omega1234}
\end{equation}
where $\Phi_2(\rho) = \sum_i L_i \rho L^\dagger_i$. Consider the strong sub-additivity relation
\begin{equation}
S(\omega'_4) + S(\omega'_{13}) \le S(\omega'_{123}) + S(\omega'_{24}).
\end{equation}
The left-hand side inequality~(\ref{icoh2}) which we want to prove can be rewritten as
\begin{equation}
S(\omega'_4) + S(\omega'_3) \le S(\omega'_{123}) + S(\omega'_{24}).
\end{equation}
Therefore, it is sufficient to prove that $S(\omega'_3) \le S(\omega'_{13})$. As the matrix $\omega'_3$ is diagonal
and consists of the traces of the blocks of the block diagonal matrix $\omega'_{13}$, $\omega'_{13}$ is more mixed than $\omega'_3$ and has therefore larger entropy.
\end{proof}

\section{The Jensen-Shannon Divergence}
\label{s5}

Let us now consider the static case in~Fig.~1b: a quantum ensemble $\{q_i,\rho_i\}$ which determines the average state $\bar \rho := \sum_i q_i \rho_i$. For an ensemble of classical measures, $\{q_i,\mu_i\}$ one defines the {\sl generalized Jensen-Shannon divergence}~(JSD) by
\begin{equation}
JS(\{q_i,\mu_i\}) :=  H\Bigl(\sum_i q_i \mu_i \Bigr) - \sum_i q_i H(\mu_i),
\end{equation}
which is an exact classical analogue of the expression~(\ref{holev}). Hence the Holevo quantity $\chi$ is often called the Quantum Jensen-Shannon divergence~(QJSD)~\cite{topsoe,MLP95,briet}.

It is intuitively clear that the results in Section~\ref{s3} may be used to derive upper bounds for QJSD, although the map $\Phi$ in Proposition~\ref{pro2} and its Kraus form are not specified here. For any initial state $\rho$ and an arbitrary Kraus operator $K_i$ we consider the polar decomposition $K_i\rho^{1/2} = X_iU_i$. Here $X_i$ is a Hermitian matrix and $U_i$ is unitary. Note that $X_i^2 = K_i \rho K_i^\dagger$, and this is equal to $q_i\rho'_i$. Therefore $K_i \sqrt{\rho} = \sqrt{q_i\rho'_i} U_i$, so the elements of the correlation matrix~(\ref{sigma}) read:
\begin{equation}
\sigma_{ij} = \tr K_i \rho K_j^\dagger K_i = \sqrt{q_i  q_j }\; \tr \sqrt{\rho_i} U_i U_j^\dagger\sqrt{\rho_j}.
\label{sigma2}
\end{equation}
In this way we arrive at

\begin{corollary}
Consider a quantum ensemble $\{q_i,\rho_i\}$ and a collection of unitary matrices $\{U_i\}$ and construct the correlation matrix $\sigma$ as in~(\ref{sigma2}). Then $\chi (\{q_i,\rho_i\}) \le S(\sigma)$.
\end{corollary}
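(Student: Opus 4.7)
The plan is to reverse-engineer the dynamical picture of Proposition~\ref{pro2} starting from the static data $\{q_i,\rho_i,U_i\}$. I take as input state the barycentre $\rho:=\bar\rho=\sum_i q_i\rho_i$ and define the operators $K_i := \sqrt{q_i\rho_i}\,U_i\,\bar\rho^{-1/2}$, with $\bar\rho^{-1/2}$ interpreted as the Moore--Penrose pseudoinverse on the range of $\bar\rho$. A short algebraic check gives $K_i\bar\rho K_i^\dagger = q_i\rho_i$, so the dynamical ensemble produced by $\{K_i,\bar\rho\}$ reproduces the prescribed static one. By cyclicity of the trace, $\tr\bar\rho\,K_j^\dagger K_i = \sqrt{q_iq_j}\,\tr\sqrt{\rho_i}\,U_iU_j^\dagger\sqrt{\rho_j}$, which is precisely the matrix element of $\sigma$ in~(\ref{sigma2}).

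Having matched the setup, I would re-run the three-party construction from the proof of Proposition~\ref{pro2}(a): form $\omega_{123}:=F\bar\rho F^\dagger$ with $F|\phi\>:=\sum_i |i\>\otimes|i\>\otimes K_i|\phi\>$, observe that $\omega_{123}\ge 0$ and $\tr\omega_{123}=\sum_i q_i=1$ so that it is a bona fide tripartite state, read off the identifications $S(\omega_{12})=S(\sigma)$, $S(\omega_3)=S(\bar\rho)$ and $S(\omega_1)-S(\omega_{23})=-\sum_i q_i S(\rho_i)$, and apply the strong sub-additivity inequality~(\ref{ssa2}). The output is $S(\bar\rho)-\sum_i q_i S(\rho_i)\le S(\sigma)$, i.e.\ $\chi(\{q_i,\rho_i\})\le S(\sigma)$.

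The one conceptual point requiring care — and where I expect the main obstacle sits — is that $\sum_i K_i^\dagger K_i$ need not equal $\idty$ for the $K_i$ above, so these operators do not literally constitute a trace-preserving quantum operation as the statement of Proposition~\ref{pro2} formally assumes. This turns out to be harmless: the proof of part~(a), in contrast to part~(b), nowhere invokes the isometry of $F$. It uses only that $\omega_{123}$ is a legitimate density matrix (already verified) and the algebraic identifications of its marginals, which depend purely on the block form of $\omega_{123}$ and not on any resolution-of-identity condition on the $K_i$. Hence the argument goes through verbatim and establishes the corollary.
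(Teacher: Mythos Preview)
Your proof is correct and follows exactly the route the paper intends: the relation $K_i\sqrt\rho=\sqrt{q_i\rho_i}\,U_i$ derived in the text immediately preceding the corollary is inverted to manufacture Kraus-like operators from the static data, after which Proposition~\ref{pro2}(a) applies. You have in fact been more careful than the paper, which simply writes ``In this way we arrive at'' without checking whether the reconstructed $K_i$ form a resolution of the identity; your observation that the proof of part~(a) never uses the isometry of $F$---only that $\omega_{123}$ is a genuine density matrix---is precisely what closes that gap.
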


As a simple application consider the case of an ensemble with 2 elements. To obtain the lowest upper bound for QJSD
we need to minimize the entropy of the correlation matrix~(\ref{sigma2}) over sets of unitaries $\{U_1,U_2\}$. This is equivalent to finding the POVM which minimizes $S(\sigma)$ among all measurements which result in the same ensemble of output states.

\begin{lemma}
Consider two density matrices $\rho_1$ and $\rho_2$ occurring with probabilities $(\lambda, 1-\lambda)$.
The smallest entropy of the correlation matrix~(\ref{sigma2}) over unitaries $U_1$ and $U_2$ is achieved for the matrix
\begin{equation}
\sigma_{\lambda} = \begin{pmatrix}
\lambda & \sqrt{\lambda(1-\lambda)} \sqrt F \\
\sqrt{\lambda(1-\lambda)}\sqrt F & 1-\lambda
\end{pmatrix},
\label{sigmamin}
\end{equation}
where $\sqrt F$ is the root fidelity~\cite{Jo94}: $\sqrt F = \tr\sqrt{ \rho_1^\frac{1}{2} \rho_2 \rho_1^\frac{1}{2} }$.
\end{lemma}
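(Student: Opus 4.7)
The plan is to reduce the problem to maximising the modulus of a single complex quantity and then invoke Uhlmann's variational characterisation of the root fidelity.

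First I would observe that $\sigma$ is a $2\times2$ Hermitian matrix with fixed diagonal $(\lambda,1-\lambda)$ and trace $1$. Its eigenvalues are
\begin{equation}
p_\pm = \frac{1}{2} \pm \sqrt{\bigl(\lambda-\tfrac{1}{2}\bigr)^2 + |\sigma_{12}|^2},
\end{equation}
and hence $S(\sigma)=-p_+\ln p_+ - p_-\ln p_-$ depends on the unitaries $U_1,U_2$ only through $|\sigma_{12}|$. Since the binary entropy is a strictly concave function symmetric about $1/2$, it is strictly decreasing in $|p_+ - 1/2|$, so minimising $S(\sigma)$ is equivalent to maximising $|\sigma_{12}|$.

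Next, I would use the explicit form
\begin{equation}
|\sigma_{12}| = \sqrt{\lambda(1-\lambda)}\,\bigl|\tr\sqrt{\rho_1}\,U_1 U_2^\dagger\sqrt{\rho_2}\,\bigr|
\end{equation}
from~(\ref{sigma2}). The unitaries $U_1,U_2$ enter only through the combination $V := U_1 U_2^\dagger$, which itself ranges over the whole unitary group. Setting $M := \sqrt{\rho_2}\sqrt{\rho_1}$, the task becomes
\begin{equation}
\max_{V \text{ unitary}} \bigl|\tr(VM)\bigr|.
\end{equation}
By the polar/singular value decomposition $M = W|M|$, choosing $V = W^\dagger$ yields $\tr(VM) = \tr|M|$, which is optimal since $|\tr(VM)|\le \sum_k s_k(M) = \tr|M|$. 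Therefore
\begin{equation}
\max_V \bigl|\tr(VM)\bigr| = \tr\sqrt{M^\dagger M} = \tr\sqrt{\sqrt{\rho_1}\,\rho_2\,\sqrt{\rho_1}} = \sqrt{F}.
\end{equation}

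Finally, since the entropy depends only on $|\sigma_{12}|$, the optimising $V$ can be multiplied by a phase to render $\sigma_{12}$ real and non-negative without affecting the spectrum; this produces exactly the matrix $\sigma_\lambda$ in~(\ref{sigmamin}). The main subtlety is the variational identity $\max_V|\tr(VM)|=\tr|M|$; this is essentially Uhlmann's theorem and is the content of the root fidelity, so the key work lies in recognising that the trace in~(\ref{sigma2}) is precisely the object that Uhlmann's formula maximises.
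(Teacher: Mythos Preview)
Your proof is correct and follows essentially the same route as the paper: reduce minimising $S(\sigma)$ to maximising $|\sigma_{12}|$, then use the polar decomposition of $\sqrt{\rho_2}\sqrt{\rho_1}$ to identify the maximum with the root fidelity. The paper phrases the optimisation via the trace inequality $|\tr AB|\le\|A\|\tr|B|$ rather than Uhlmann's theorem, but this is the same argument; your version is in fact more explicit about why the entropy depends only on $|\sigma_{12}|$ and about the phase adjustment yielding the real matrix~(\ref{sigmamin}).
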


\begin{proof}
Given $\lambda$, $\rho_1$, and $\rho_2$ the entropy $S(\sigma)$ is minimal, if the absolute value of the off-diagonal element $\sigma_{12}$ is maximal. As $\abs{\tr A\,B} \le \norm a\, \tr \abs B$ we have the upper bound
\begin{equation}
|\tr \rho_2^\frac{1}{2} \rho_1^\frac{1}{2} U_1 U_2^\dagger| \le \tr |\rho_2^\frac{1}{2} \rho_1^\frac{1}{2}| = \sqrt F.
\end{equation}
Moreover the inequality is saturated by choosing for $U_1 U_2^\dagger$ the adjoint of the unitary of the polar decomposition of $\rho_2^\frac{1}{2} \rho_1^\frac{1}{2}$.
\end{proof}

Let us now set $\lambda = 1-\lambda = \tfrac{1}{2}$. In this case the quantum Jensen-Shannon divergence can be written as $\r{QJS}(\rho_1,\rho_2) = \tfrac{1}{2}\, [S(\rho_1 \Vert \bar \rho) + S(\rho_2 \Vert \bar \rho)]$, where $\bar \rho = \tfrac{1}{2}\, (\rho_1 + \rho_2)$ and  $S(\rho_1 \Vert \rho_2) := \tr \rho_1(\ln \rho_1 - \ln \rho_2)$
is the quantum relative entropy~\cite{nielsen}. The spectrum of the correlation matrix~(\ref{sigmamin}) is
$(\mu, 1-\mu)$ with $\mu = \tfrac{1}{2}\, (1 - \sqrt F)$. Hence we get an explicit formula for the exchange entropy $S(\sigma)$ and the universal bound for QJSD 
\begin{equation}
\r{QJS}(\rho_1,\rho_2) \le H_2 \Bigl( \tfrac{1}{2}\, \bigl( 1 - \sqrt{F(\rho_1,\rho_2)} \bigr) \Bigr)
\label{bound}
\end{equation}
where $H_2(x) := -x\ln x - (1-x) \ln{(1-x)}$ is the Shannon entropy of a probability vector of size 2.

The right-hand side of~(\ref{bound}) can be used to characterize closeness between quantum states. Although the entropy $H_2$ does not obey the triangle inequality its square root does. Such an {\sl entropic distance} was advocated by Lamberti et al.~\ \cite{LPS09} as a natural metric in the space of quantum states:
\begin{equation}
D_E(\rho_1,\rho_2) := \sqrt{ H_2 \bigl( \tfrac{1}{2}\, (1 - \sqrt{F(\rho_1,\rho_2)}) \bigr)}.
\label{entropic}
\end{equation}
To show that $D_E$ is a distance one may use  the {\sl Bures distance}, $D_B(\rho_1,\rho_2) = \sqrt{2 - 2\sqrt{F(\rho_1,\rho_2)}}$. Since both quantities are functions of fidelity, one can write the entropic distance as a function of the Bures distance $D_E(D_B) = \sqrt{H_2(D_B^2/4)}$. As the second derivative of $D_E(D_B)$ is negative, this function is concave, so $D_E$ satisfies the axioms of a distance.

Turning now to the classical case, i.e., diagonal density matrices, the root fidelity reduces to the {\sl Bhattacharyya coefficient} $B(P,Q) = \sum_i \sqrt{p_iq_i}$ while the Bures distance is equivalent to the {\sl Hellinger distance}~\cite{BZ06} $D_H(P,Q) = \sqrt{\sum_i (\sqrt{p_i}-\sqrt{q_i})^2}$. The entropic distance $D_E$ between two classical states is then a concave function of their Hellinger distance, $D_E(P,Q) = \sqrt{H_2(D_H^2(P,Q)/4)}$. Although JSD does not satisfy the triangle inequality, its square root does~\cite{ES03} and is called the transmission distance~\cite{briet}. Inequality~(\ref{bound}) implies thus the following relation between the transmission  distance $D_T$ and the entropic distance $D_E$ used in~\cite{LPS09},
\begin{equation}
D_T(P,Q):= \sqrt{\r{DJS}(P,Q)} \le D_E(P,Q),
\label{entbound}
\end{equation}
which is illustrated in Fig.~\ref{fig:fig2}.

\begin{figure}[htbp]
\centering
\includegraphics[width=0.49\textwidth]{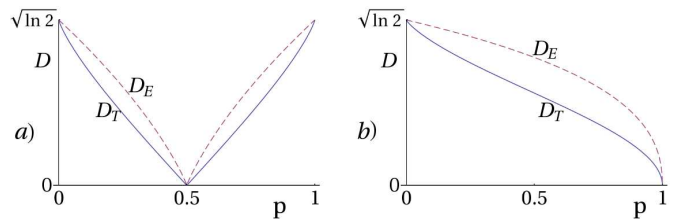}
\caption{The transmission distance $D_T$ (solid line) is bounded from above by the entropic distance $D_E$ (dashed line) \\ 
a) for pairs of classical states $P = (p,1-p)$ and $Q=(1-p,p)$ \\
b) for pairs of classical states $P = (p,1-p)$ and $Q=(1,0)$.}
\label{fig:fig2}
\end{figure}

\section{Conclusion}

In this work we showed that the Holevo quantity $\chi$ is bounded by the exchange information $S(\sigma)$
and we analyzed some consequences of this result. In particular, we showed that the transmission distance $D_T$  between classical states is bounded by their entropic distance $D_E$.

\section*{Acknowledgements}

It is a pleasure to thank R.~Alicki and M.P.H.~Horodecki for fruitful discussions and N.~Datta, P.~Herremo{\"e}s,  P.~Lamberti, and F.~Tops{\o}e  for helpful correspondence. This work was supported by the grant number DFG-SFB/38/2007 of the Polish Ministry of Science and Higher Education and by the Belgian Interuniversity Attraction Poles Programme P6/02.

\end{document}